\newtheorem{theorem}{\textbf{Theorem}}
\newtheorem{remark*}{Remark}
\newcommand{\PSPACE}{{\mathsf{PSPACE}}}
\newcounter{finaln}
\newcommand{\entiers}[2][2]{\ifthenelse{\equal{#1}{2}}
{\llbracket #2\rrbracket}
{\ifthenelse{\equal{#1}{0}}{
\ifthenelse{\equal{11}{\the\catcode`#2}}
{\{0,\ldots,#2 -1\}}
{\setcounter{finaln}{#2 -1}
\{0,\ldots,\thefinaln \}}}
{\{1,\ldots,#2\}}}}
\newcommand{\bool}{\{0,1\}}
\renewcommand{\int}[1]{[#1]}
\newcommand{\inter}[1]{G_{#1}}
\newcommand{\mmjblock}[2]{{#1}_{(#2)}}
\newcommand{\mmjoblock}[2]{#1_{\{#2\}}}
\newcommand{\mmj}[2]{{#1}_{#2}}
\newcommand{\mmjsub}[3]{\mmjoblock{#1}{#2}^{(#3)}}
\newcommand{\BPn}{\mathsf{BP}_n}
\newcommand{\BSn}{\mathsf{BS}_n}
\newcommand{\cyclen}{C^{n+}}
\newcommand{\bancyclen}{f^{n+}}
\newcommand{\arbitfn}{\smash{\hat{h}}^{n}}
\newcommand{\arbitf}{\smash{\hat{h}}}
\newcommand{\arbitmun}{\smash{\hat{\mu}}^{n}}
\newcommand{\arbitmu}{\smash{\hat{\mu}}}
\newcommand{\lcm}{\text{lcm}}
\title{Creation of fixed points\\
in block-parallel Boolean automata networks}
\author[1,2]{K{\'e}vin Perrot}
\author[1,2]{Sylvain Sen{\'e}}
\author[1]{L{\'e}ah Tapin}
\affil[1]{Aix Marseille Univ, CNRS, LIS, Marseille, France}
\affil[2]{Universit{\'e} publique, Marseille, France}
\date{}
\begin{document}

\renewcommand{\labelitemi}{$\bullet$}
\renewcommand{\labelitemii}{$\bullet$}
\setlist[itemize,enumerate]{nosep}

\maketitle
\begin{abstract}
  In the context of discrete dynamical systems and their applications,
  fixed points often have a clear interpretation.
  This is indeed a central topic of gene regulatory mechanisms modeled
  by Boolean automata networks (BANs),
  where a collection of Boolean entities (the automata)
  update their state depending on the states of others.
  Fixed points represent phenotypes such as differentiated cell types.
  The interaction graph of a BAN captures the architecture of dependencies among its automata.
  A first seminal result is that cycles of interactions (so called feedbacks)
  are the engines of dynamical complexity.
  A second seminal result is that fixed points are invariant under block-sequential update schedules,
  which update the automata following an ordered partition of the set of automata.
  In this article we study the ability of block-parallel update schedules (dual to the latter)
  to break this fixed point invariance property,
  with a focus on the simplest feedback mechanism: the canonical positive cycle.
  We quantify numerically the creation of new fixed points,
  and provide families of block-parallel update schedules generating exponentially many
  fixed points on this elementary structure of interaction.
\end{abstract}

\section{Introduction}

A Boolean automata network (BAN) is a finite discrete dynamical system
defined by $n$ local functions $f_i:\bool^n\to\bool$ for $i\in\{0,\dots,n-1\}$.
A configuration is an element of $\bool^n$, and an update schedule describes in which order
its components (the automata) are updated to compute the next configuration.
For example the parallel update schedule corresponds to the fully synchronous application
of its local function to each automaton of the network.
The structure of interactions is captured by a directed graph $\inter{f}$
on the vertices $\{0,\dots,n-1\}$, where the arcs represent the effective dependencies
among the automata.
When $\inter{f}$ is acyclic the dynamics converges to a unique fixed point~\cite{J-Robert1980},
therefore cycles of interactions are the engines of dynamical asymptotic heterogeneities.

BANs have many applications, in particular they are a classical model of gene regulatory networks
since the seminal papers of Kauffman~\cite{J-Kauffman1969} and Thomas~\cite{J-Thomas1973}.
In this context, fixed point configurations correspond to stable patterns of gene expression
at the basis of cellular phenotypes~\cite{B-Thomas1990,J-Wang2012}.
From a theoretical standpoint, their quantity is arguably the most studied dynamical
property of BANs~\cite{J-Aracena2008,J-Aracena2017,dns12,J-Gadouleau2014,J-Remy2008,J-Riis2007,B-Robert1986}.
A famous fact is that fixed points are invariant under any block-sequential update schedule,
which corresponds to partitionning the set of automata into subsets $W_1,\dots,W_p$ (called blocks) and
updating synchronously the automata in $W_1$, then the ones in $W_2$, \emph{etc},
and eventually the automata in the subset $W_p$~\cite{B-Robert1986}.
In this paper we study the block-parallel update schedules, which are dual to the block-sequential,
and have the ability to create new fixed points compared to this well established invariance.

Update schedules received a great attention, and debates are still open regarding the most
realistic schedule. Indeed, the seminal works~\cite{J-Kauffman1969,J-Thomas1973}
considered different update schedules.
A hierarchy based on a formal notion of simulation has been established,
with important members of increasing expressivity behing
the parallel update schedule, the sequential ones (permutations of the automata),
the block-sequential ones, the periodic ones (any sequence of subsets, repeatedly applied)
and the deterministic ones~\cite{J-Pauleve2022}.
Bloc-parallel update schedules are the lowest in this hierarchy with the hability to create fixed points
(just above sequential), at a small cost
(size at least 5 for a simple cycle, according to our experiments).

Instead of blocks of automata applied in parallel within each block
and sequentially the different blocks,
block-parallel update schedules apply the local functions sequentially within each block
and in parallel the different blocks (called ordered-block, to avoid confusion).
The number of substeps, instead of being the number of blocks, becomes the least common multiple
of the size of the ordered-blocks (the first time the same set of automata will be updated again).
They have been introduced and justified in~\cite{J-Demongeot2020},
in relation to the ability of chromatin clocks to influence the speed or rate at which
the expression state of genes evolves within the cell nucleus.
The chromatin organization (proteins in contact with the DNA)
plays an important role in the transcription machinery,
as it may disable the accessibility of DNA strands~\cite{J-Zhang2020}.
This process can enforce new attractors to appear (or change in size), as studied in the present work.

\paragraph{Outline}

Section~\ref{s:def} presents all the necessary definitions, with an example of
block-parallel update schedule creating two new fixed points on the positive cycle of size $5$.
Section~\ref{s:art} reviews previous works, and Section~\ref{s:newfp} contains our contributions.
In Subsection~\ref{ss:arbit} we define a simple family of examples creating new fixed points
on disconnected interaction graphs;
in Subsection~\ref{ss:count} we explain how to efficiently count fixed points on positive cycles
from the interaction graph of its parallelization;
in Subsection~\ref{ss:cycle_num} we expose exhaustive numerical experiments quantifying
the creation of new fixed points on positive cycles up to size $11$;
and in Subsection~\ref{ss:cycle_thm} we provide explicit families of block-parallel update schedules
creating exponentially many new fixed points on positive cycles.
Section~\ref{s:conc} concludes and gives perspectives.

\section{Definitions}
\label{s:def}

Let $\entiers{n}=\{0,\dots,n-1\}$ and $\int{k}=\{1,\dots,k\}$.
The definitions are illustrated with an example at the end of the section.

\paragraph{Boolean automata networks}
A \emph{Boolean automata network} (BAN) of size $n$ is a function $f$
on the configuration space $\bool^n$.
Each Boolean component $i\in\entiers{n}$ is called an \emph{automaton},
and in a configuration $x\in\bool^n$ it is in a \emph{state} denoted $x_i\in\bool$.
The individual behavior of each automaton $i\in\entiers{n}$ is described by a \emph{local function}
$f_i:\bool^n\to\bool$, and a dynamics will be obtained when provided with an order (schedule)
on the applications of these local functions to update the states of the automata.
A BAN is formally defined as the collection of its $n$ local functions $\left(f_i\right)_{i\in\entiers{n}}$,
or equivalently by grouping them into $f:\bool^n\to\bool^n$ where
$f(x)_i = f_i(x)$ for any $x$ and $i$.

\paragraph{Interaction graphs}
The architecture of a BAN $f$ is captured in its \emph{interaction graph} denoted $\inter{f}=(V_f,A_f)$,
representing the effective dependencies among its automata.
We have $V_f=\entiers{n}$ \emph{i.e.}~one vertex per automaton,
and an arc from $i$ to $j$ whenever the local function of $j$ depends on the state of $i$
(the in-neighbors of $j$ are the essential variables~\cite{B-Crama2011} of the Boolean function $f_j:\bool^n\to\bool$).

Cycles in the interaction graph play a central role in automata network theory
(discussed in the state of the art section).
In this paper, we will explore the simplest form of interaction with feedback
and its new capabilities through block-parallel updates:
the canonical \emph{positive cycle} of size $n$.
It is the BAN $\bancyclen$ defined by the local functions $\bancyclen_i(x)=x_{i-1 \mod n}$ for $i\in\entiers{n}$,
with $-1 \mod n = n-1$.
Its interaction graph is the simple directed cycle of length $n$, denoted $\cyclen$
(on vertex set $\entiers{n}$ with arcs $\{ (i-1 \mod n,i) \mid i\in\entiers{n}\}$.

\paragraph{Update schedules}
Local functions provide a static description of some BAN, and its actual dynamics greatly depends on
the order of automata updates. This is provided by an \emph{update schedule},
the two most famous being the parallel (fully synchronous update of all states at each step \cite{J-Kauffman1969})
and asynchronous (non-deterministic update of any, but exactly one,
automaton at each step \cite{J-Thomas1973}) schedules.
More complex schedules mix these two approaches, and in the deterministic landscape the notion of \emph{block}
is central: it consists in a subset of automata $W\subseteq\entiers{n}$ whose states are updated simultaneously
during a \emph{substep}.
We denote $\mmj{f}{W}:\bool^n\to\bool^n$ the function such that:
\[
  \mmj{f}{W}(x)=\begin{cases}
    f_i(x) &\text{if } i\in W,\\
    x_i &\text{otherwise}.
  \end{cases}
\]

\paragraph{Block-sequential update schedules}
A \emph{block-sequential update schedule} is a sequence of blocks updated sequentially
(one block during each substep), such that each automaton is updated exactly once during a step.
It is formally given as an ordered partition $\mu=(W_1,\dots,W_p)$ of $\entiers{n}$,
and its associated \emph{transition function} or \emph{dynamics} on configuration space $\bool^n$
is defined as $\mmjblock{f}{\mu} = \mmj{f}{W_p}\circ\ldots\circ\mmj{f}{W_1}$,
where $\circ$ is the standard composition of functions.
The parallel schedule is a particular case of block-sequential update schedule, namely $\mu=(\entiers{n})$
with a unique block.
Let $\BSn$ denote the set of block-sequential update schedules possible for BANs of size $n$.

\paragraph{Block-parallel update schedules}
\emph{Block-parallel update schedules} are dual of block-sequential schedules:
the updates are sequential within one block, and the blocks are updated synchronously.
We formally employ the term \emph{o-block} to define them, it is a sequence of automaton,
\emph{i.e.}~an element of $\entiers{n}^*$.
A partitioned order $\mu=\{S^1,\dots,S^k\}\subseteq\entiers{n}^*$ is a finite set of o-blocks
(where $S^i_j$ will denote the $j$-th element of the sequence $S^i\in\entiers{n}^*$,
with $j\in\int{|S^i|}$),
such that the o-blocks have no automaton in common,
and the union of their automata is $\entiers{n}$.
The first substep will update the first element of each o-block,
the second substep will update the second element of each o-block,
\emph{etc} (considering the automata within each one o-block cyclically),
until the last elements of all o-blocks are updated during the same substep,
which occurs at the least common multiple of their sizes substep.
Formally, it is convenient to ``sequentialize'' a block-parallel update schedule,
via the morphism $\varphi$ from partitioned orders to sequences of blocks, defined as
$\varphi(\mu) = (W_1,\dots,W_\ell)$ with $\ell = \lcm(|S^1|,\dots,|S^k|)$ and:
\[
  W_i = \{ S^j_{\left(i-1 \mod |S^j|\right)+1} \mid j\in\int{k} \} \text{ for } i\in\int{\ell}.
\]
The \emph{transition function} or \emph{dynamics} is then defined as
$\mmjoblock{f}{\mu} = \mmjblock{f}{\varphi(\mu)} = \mmj{f}{W_\ell}\circ\ldots\circ\mmj{f}{W_1}$.
Let $\BPn$ denote the set of block-parallel udpate schedules possible for BANs of size $n$.
Again, the parallel update schedule is an element of $\BPn$, corresponding to the set of $n$ singleton sequences
$\mu=\{(0),(1),\dots,(n-1)\}$.
The intersection of $\BPn$ and $\BSn$ is characterized in~\cite{C-Perrot2024a}.

\paragraph{Fixed points}
Given a BAN $f$ of size $n$ and an update schedule $\mu$,
we denote the dynamics $\mmjblock{f}{\mu}$ when $\mu\in\BSn$,
and $\mmjoblock{f}{\mu}$ when $\mu\in\BPn$.
In both cases the dynamics is deterministic, and we abstractely denote it $\mmj{f}{\mu}$
when $\mu$ may be block-sequential or block-parallel.
A \emph{fixed-point} is a configuration $x\in\bool^n$ such that $\mmj{f}{\mu}(x)=x$.

\paragraph{Parallelization}
Given a BAN $f$ of size $n$ and an update schedule $\mu$,
the function $\mmj{f}{\mu}:\bool^n\to\bool^n$ itself is called their \emph{parallelization}.
It can be obtained through the process of computing the substeps,
that is for $(W_1,\dots,W_\ell)\in\BSn$ or $\varphi(\mu)=(W_1,\dots,W_\ell)$ where $\mu\in\BPn$,
to consider inductively:
\[
  \mmjsub{f}{\mu}{0}(x) = x
  \quad\text{and}\quad
  \mmjsub{f}{\mu}{i}(x) = \mmjblock{f}{W_i}(\mmjsub{f}{\mu}{i-1}(x))
  \text{ for }
  i\in\int{\ell}.
\]
Properties on the structure of the sequence of interaction graphs:
\[
  (\inter{\mmjsub{f}{\mu}{i}})_{i\in\int{\ell}}
  \text{ with }
  \inter{\mmjsub{f}{\mu}{\ell}}=\inter{\mmj{f}{\mu}}
\]
will be useful to study the dynamics of cycles under block-parallel update schedules.

With block-sequential updating schedules, the value of $\ell$ is capped by $n$
and the parallelization can be computed in polynomial time
(given as inputs $f$, $\mu$ and $x$, output $\mmjblock{f}{\mu}(x)$)~\cite{C-Perrotin2023},
but with block-parallel update schedules the number of substeps may grow exponentially with $n$
and computing the parallelization is hard for $\PSPACE$
(even a single bit of $\mmjoblock{f}{\mu}(x)$)~\cite{C-Perrot2024b}.

\paragraph{Example}

Let $f^{5+}$ the
canonical positive cycle of size $n=5$,
and the two block-parallel updates schedules $\mu_{par} = \{(0), (1), (2), (3), (4)\}$ and
$\mu_{new}= \{(0,1), (2, 3, 4)\}$.
Observe that $\mu_{par}$ is the parallel update schedule, hence
$f^{5+}_{\mu_{par}}$ has two fixed points: $00000$ and $11111$.
The schedule $\mu_{new}$ is more complex:
it is not equivalent to a block-sequential update schedule,
because it has repetitions (\emph{i.e.}, automata update more than once during one step)
in its sequence of blocks form:
$\varphi(\mu_{new}) = (\{0, 2\}, \{1, 3\}, \{0, 4\}, \{1, 2\}, \{0, 3\}, \{1, 4\})$.

For the sake of simplicity, we will denote $f^{5+}_{\mu_{new}}$ as $g$ to expose
the parallelization process.
We start with $g^{(0)}(x) = x$.
\[
  \begin{array}{ccc}
g^{(1)}(x) : \left\lbrace
  \begin{array}{l}
    g^{(1)}_0(x) = x_4\\
    g^{(1)}_1(x) = x_1\\
    g^{(1)}_2(x) = x_1\\
    g^{(1)}_3(x) = x_3\\
    g^{(1)}_4(x) = x_4\\
  \end{array}
\right.
    &
g^{(2)}(x) : \left\lbrace
  \begin{array}{l}
    g^{(2)}_0(x) = x_4\\
    g^{(2)}_1(x) = x_4\\
    g^{(2)}_2(x) = x_1\\
    g^{(2)}_3(x) = x_1\\
    g^{(2)}_4(x) = x_4\\
  \end{array}
\right.
    &
g^{(3)}(x) : \left\lbrace
  \begin{array}{l}
    g^{(3)}_0(x) = x_4\\
    g^{(3)}_1(x) = x_4\\
    g^{(3)}_2(x) = x_1\\
    g^{(3)}_3(x) = x_1\\
    g^{(3)}_4(x) = x_1\\
  \end{array}
\right.
    \\[3.5em]
g^{(4)}(x) : \left\lbrace
  \begin{array}{l}
    g^{(4)}_0(x) = x_4\\
    g^{(4)}_1(x) = x_4\\
    g^{(4)}_2(x) = x_4\\
    g^{(4)}_3(x) = x_1\\
    g^{(4)}_4(x) = x_1\\
  \end{array}
\right.
   &
g^{(5)}(x) : \left\lbrace
  \begin{array}{l}
    g^{(5)}_0(x) = x_1\\
    g^{(5)}_1(x) = x_4\\
    g^{(5)}_2(x) = x_4\\
    g^{(5)}_3(x) = x_4\\
    g^{(5)}_4(x) = x_1\\
  \end{array}
\right.
   &
g^{(6)}(x) : \left\lbrace
  \begin{array}{l}
    g^{(6)}_0(x) = x_1\\
    g^{(6)}_1(x) = x_1\\
    g^{(6)}_2(x) = x_4\\
    g^{(6)}_3(x) = x_4\\
    g^{(6)}_4(x) = x_4\\
  \end{array}
\right.
  \end{array}
\]

This process is illustrated in Figure~\ref{fig:example_cycle5},
starting with the interaction graph of $g^{(0)}(x)$ which is the identity function.
The resulting graph has two
connected components, consequently $f^{5+}_{\mu_{new}}$ has two more fixed
points compared to $f^{5+}_{\mu_{par}}$: $00111$ and $11000$.

\begin{figure}
  \centering
  \includegraphics[scale=0.65]{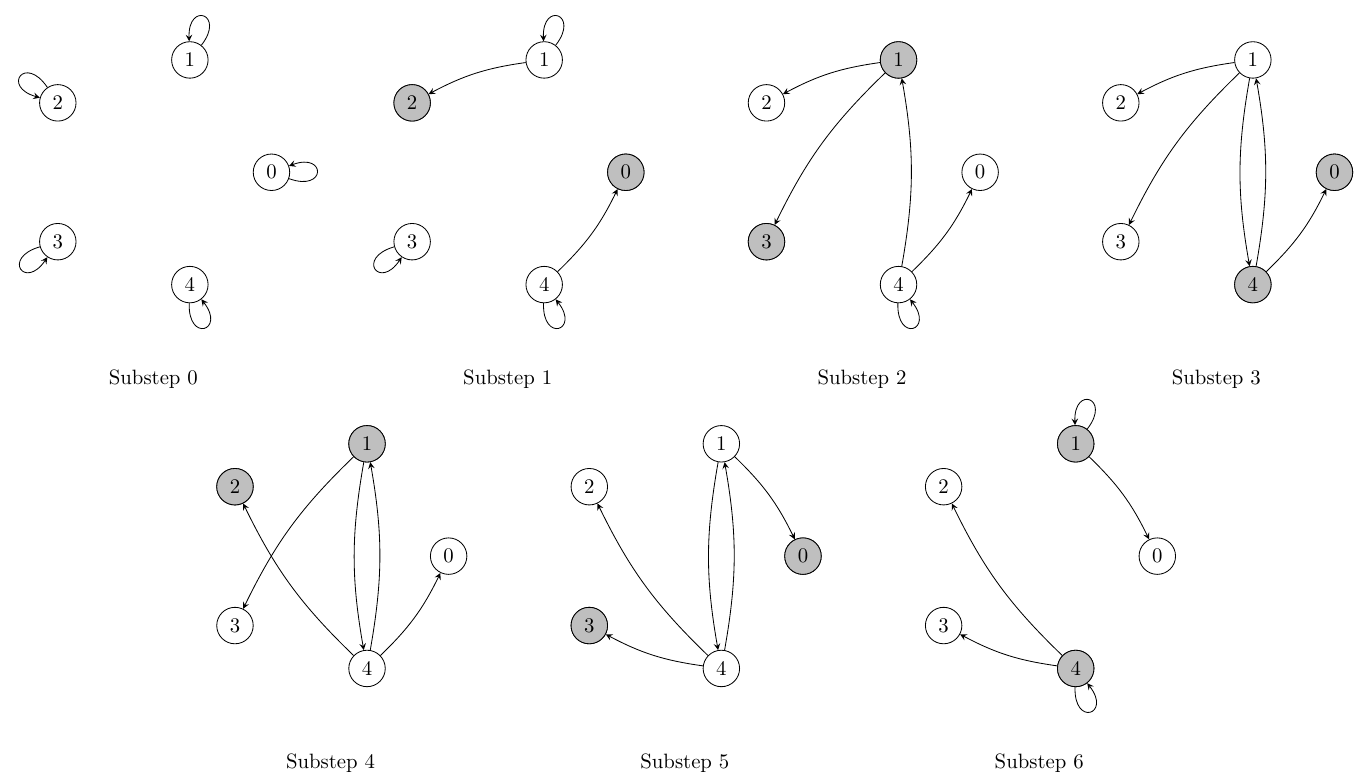}
  \caption{
    Substeps of the parallelization of $f^{5+}$ for the block-parallel update schedule
    $\mu_{new}$. The automata updated at each substep are greyed.
  }
  \label{fig:example_cycle5}
\end{figure}

\section{State of the art}
\label{s:art}

A seminal result of Robert states that cycles in the interaction graph (so called feedbacks)
are the engine of dynamical complexity in Boolean automata networks.
Indeed, a BAN $f$ of size $n$ with $\inter{f}$ acyclic converges towards
a unique fixed point in at most $n$ steps~\cite{J-Robert1980,Gadouleau2023}.
The size of a minimal feedback vertex set in $\inter{f}$ also bounds the maximum number of fixed points
obtained under parallel and asynchronous update schedules~\cite{J-Aracena2008,J-Riis2007}.

One easily notices that if a configuration $x$ is a fixed point of $f$ for the parallel update schedule
(\emph{i.e.}~$f(x)=x$), then $x$ is also a fixed point for any update schedule $\mu$
(\emph{i.e.}~$\mmj{f}{\mu}(x)=x$)~\cite{B-Goles1990}.
Within block-sequential update schedules the fixed point invariance is stronger:
for any $\mu\in\BSn$ the set of fixed points of $\mmjblock{f}{\mu}$ is exactly the same~\cite{B-Robert1986}.
Since the parallel update schedule is an element of $\BSn$, this can be stated as
$f(x)=x$ if and only if $\mmjblock{f}{\mu}(x)=x$.
In the present work we study how this invariance is broken under block-parallel update schedules,
where new fixed points can emerge in the dynamics.

Block-parallel update schedules have been introduced in~\cite{J-Demongeot2020},
where the authors show examples of fixed points creation
(and suggest biological interpretations).
This is the track we will develop in the rest of this article.
The theoretical understanding of block-parallel update schedules has been developed in~\cite{C-Perrot2024a},
providing formulas to count them and algorithms to enumerate them
(exploited for our numerical simulations in Subsection~\ref{ss:cycle_num}).
Update repetitions allow to have an exponential number of substeps within a single iteration
(\emph{cf.}~our discussion on $\varphi$ above), and in~\cite{C-Perrot2024b} it is proven that
the computational complexity of classical problems tend to be lifted to the level of $\PSPACE$,
although this is not a universal rule (unless major complexity collapses).

Let us recall two simple observations.
First, the positive cycle BAN $\bancyclen$ of size $n$ whose interaction graph is $\cyclen$,
has two fixed points in parallel: the two uniform configurations
denoted $0^n$ and $1^n$.
Second, if we modify $\bancyclen$ to set the local function of one $i$ as $f(x)_i=\neg x_{i-1}$
(instead of $\bancyclen_i(x)=x_{i-1}$), we obtain a negative cycle which
has no fixed point in parallel.
An exhaustive study of their limit dynamics is exposed in~\cite{dns12}.

\section{New fixed points for block-parallel update schedules}
\label{s:newfp}

As an introduction to this section exposing our contributions,
we present on Figure~\ref{fig:example_3} a minimal example
of block-parallel update schedule for an automata network of size $3$
which creates two new fixed points compared to any block-sequential update schedule.
The pitfall of this example is that its interaction graph is disconnected,
as the first famility exposed in Subection~\ref{ss:arbit}.
Subsections~\ref{ss:count}, \ref{ss:cycle_num} and~\ref{ss:cycle_thm}
focus on positive cycles, the simplest non-trivial connected structure of interaction
for the study of fixed points.

\begin{figure}
  \centering
  \includegraphics{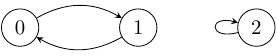}
  \caption{
    Example automata network of size $n=3$ with local functions
    $f_1(x)=x_2$, $f_2(x)=x_1$ and $f_3(x)=\neg x_3$.
    It has no fixed point in parallel (hence for any block-sequential update schedule),
    whereas for the block-parallel update schedule $\mu=\{(0,1),(2)\}$
    we have $\varphi(\mu)=(\{0,2\},\{1,2\})$ which admits the four fixed points
    $000$, $001$, $110$, and $111$.
  }
  \label{fig:example_3}
\end{figure}

\subsection{From zero to an exponential number of fixed points}
\label{ss:arbit}

In this section we provide a simple family of Boolean automata networks and block-parallel update schedules,
creating arbitrarily many new fixed points
(a quantity exponential in the size of the network and in the number of substeps)
compared to any block-sequential update schedule.

For any integer $n>0$, let $\arbitfn$ denote the BAN of size $3n$
whose interaction graph is composed of a negative cycle on $n$ automata,
plus $2n$ isolated nodes.
Formally, its local functions are defined as (interaction graph on Figure~\ref{fig:arbit}):
\begin{align*}
  & \arbitfn_0(x) = \neg x_{n-1},\\
  & \arbitfn_i(x) = x_{i-1} \text{ for all } i\in\entiers{n-1}\setminus\{0\},\\
  & \arbitfn_i(x) = 0 \text{ for all } i\in\entiers{3n-1}\setminus\entiers{n-1}.
\end{align*}
Under the parallel update schedule, or any block-sequential schedule, it has no fixed point
(because of the negative cycle).

\begin{figure}
  \centering
  \includegraphics{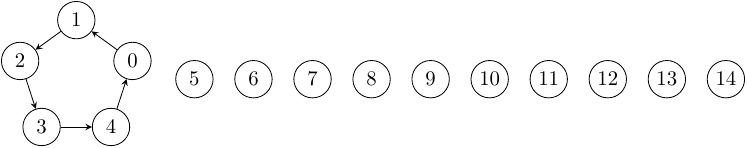}
  \caption{
    Illustration of $\inter{\arbitfn}$, the interaction graph of $\arbitfn$, for $n=5$.
  }
  \label{fig:arbit}
\end{figure}

\begin{theorem}
  For any integer $n>0$, under the block-parallel update schedule:
  \[
    \arbitmun=\{(0),(1),(2),\dots,(n-2),(n-1),(n,n+1,n+2,\dots,3n-2,3n-1)\},
  \]
  the dynamics $\mmjoblock{\arbitfn}{\arbitmun}$ has $2^n$ fixed points.
\end{theorem}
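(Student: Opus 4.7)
The plan is to parallelize $\mmjoblock{\arbitfn}{\arbitmun}$ explicitly and then decouple the dynamics of the negative cycle from that of the isolated nodes.

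First I would compute $\varphi(\arbitmun)$. The o-blocks have sizes $1,1,\ldots,1$ (the $n$ singletons) and $2n$ (the long o-block), so there are $\ell = \lcm(1,\ldots,1,2n) = 2n$ substeps. At substep $i \in \int{2n}$, each singleton $(j)$ contributes $j$, and the long o-block contributes its $i$-th entry $n+i-1$, yielding $W_i = \entiers{n} \cup \{n+i-1\}$. Thus during one full step every cycle automaton $0,\ldots,n-1$ is updated $2n$ times, while each isolated automaton $j \in \{n,\ldots,3n-1\}$ is updated exactly once, at substep $j-n+1$.

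Next I would exploit the fact that no local function reads an isolated node, so the cycle dynamics and the isolated-node dynamics decouple. For any isolated node, the unique substep where it is written sets it to $0$; therefore a fixed point must have all these $2n$ coordinates equal to $0$, contributing a factor of $1$ to the count. For the cycle nodes, each substep performs one synchronous step of the negative cycle map $h : \bool^n \to \bool^n$ defined by $h(x)_0 = \neg x_{n-1}$ and $h(x)_i = x_{i-1}$ for $i \geq 1$; hence the induced map on the cycle coordinates is exactly $h^{2n}$.

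The core remaining step, which I expect to be the main (but not deep) obstacle, is to establish that $h^{2n} = \mathrm{id}$. I would do this by tracking each initial bit $x_j$ as a signal that advances one position per application of $h$ and is negated precisely when it crosses from position $n-1$ to position $0$. After $2n$ applications the signal has traveled $2n$ positions and returned to its origin, having crossed the wraparound exactly twice; the two negations cancel, so $h^{2n}(x) = x$ for every $x$. Consequently every configuration of the cycle part is fixed, contributing $2^n$ choices, and multiplying by the forced all-zero isolated part yields exactly $2^n$ fixed points of $\mmjoblock{\arbitfn}{\arbitmun}$.
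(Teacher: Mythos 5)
Your proposal is correct and follows essentially the same route as the paper's proof: decouple the constant isolated automata (forced to $0$ in any fixed point) from the negative cycle, observe that the $2n$ substeps each apply the synchronous cycle map to automata $0,\dots,n-1$, and conclude that the composite is the identity because the single negation picked up per traversal of the cycle cancels over two traversals (the paper states this as negation after $n$ substeps, identity after $2n$). Nothing is missing.
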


\begin{proof}
  Automata from the set $\entiers{3n-1}\setminus\entiers{n-1}$ have constant $0$ local functions,
  hence their state in any fixed point is $0$.
  The least common multiple of o-block sizes, \emph{i.e.}~the number of substeps, is $\ell=2n$.
  From any initial configuration $x\in\bool^{3n}$,
  after $n$ substeps, each automaton in the cycle has its own initial state negated.
  For simplicity we denote $\arbitfn$ as $\arbitf$ and $\arbitmun$ as $\arbitmu$, thus
  for all $i\in\entiers{n-1}$, we have $\mmjsub{\arbitf}{\arbitmu}{n}(x)_i = \neg x_i$.
  It follows that after $n$ additional substeps, these automata get back to their initial state, hence
  for all $i\in\entiers{n-1}$:
  \[
    \mmjoblock{\arbitf}{\arbitmu}(x)_i = \mmjsub{\arbitf}{\arbitmu}{2n}(x)_i = x_i.
  \]
  We deduce that any $\tilde{x}\in\bool^{3n}$ with $\tilde{x}_i=0$ for all $i\in\entiers{3n-1}\setminus\entiers{n-1}$
  is a fixed point.
\end{proof}

Observe that, redefining $\arbitfn_0(x)=x_{n-1}$ (without negation)
and removing $n$ constant automata, we obtain an BAN of size $2n$
with two fixed points in parallel (for any $n>0$),
and still $2^n$ fixed points under $\arbitmun\in\BPn$,
but now $\arbitmun$ has only $n$ substeps.

\subsection{Counting fixed points on positive cycles}
\label{ss:count}

In this subsection we present two results employed to easily count
the number of fixed points obtained through the parallelization of positive cycles
(in Subsections~\ref{ss:cycle_num} and~\ref{ss:cycle_thm}),
whose interaction graphs have in-degree one.

\begin{theorem}
  \label{theorem:indegree_one}
  Let $f$ be a BAN of size $n$. If $\inter{f}$ has in-degree one,
  then for any $\mu\in\BPn$ the parallelization
  $\inter{\mmjoblock{f}{\mu}}$ also has in-degree one.
\end{theorem}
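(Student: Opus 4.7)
The plan is to induct on the substeps $\mmjsub{f}{\mu}{0}, \mmjsub{f}{\mu}{1}, \dots, \mmjsub{f}{\mu}{\ell}$ of the parallelization, showing that the property ``every component depends on at most one variable'' (equivalently, in-degree at most one in the interaction graph) is preserved by each step $\mmjsub{f}{\mu}{i} = \mmj{f}{W_i} \circ \mmjsub{f}{\mu}{i-1}$. The base case is immediate: $\mmjsub{f}{\mu}{0}$ is the identity, so each of its components is a projection $x \mapsto x_j$ with a single self-loop dependency.

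The heart of the argument is a purely syntactic composition lemma: if $F, G : \bool^n \to \bool^n$ both have all components depending on at most one variable, then so does $G \circ F$. Indeed, writing $G_j(y) = h_j(y_{k_j})$ and $F_{k_j}(x) = g(x_\ell)$, substitution yields $(G \circ F)_j(x) = h_j(g(x_\ell))$, a function of at most the single variable $x_\ell$. I would then apply the lemma inductively with $G = \mmj{f}{W_i}$ and $F = \mmjsub{f}{\mu}{i-1}$: every component of $\mmj{f}{W_i}$ is either one of the local functions $f_j$ for $j \in W_i$ (in-degree at most one, by the hypothesis on $\inter{f}$) or the projection $x \mapsto x_j$ for $j \notin W_i$ (a self-loop), so the hypothesis of the lemma is satisfied at each step. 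Iterating up to $i = \ell$ gives the conclusion for $\mmjoblock{f}{\mu} = \mmjsub{f}{\mu}{\ell}$.

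The one point worth handling carefully — not really an obstacle, but a place where naive bookkeeping could trip — is the distinction between \emph{declared} and \emph{essential} dependence. After forming $h_j \circ g$, the variable $x_\ell$ may become inessential (for instance if $h_j$ or $g$ is constant on $\bool$), but this only lowers the in-degree of vertex $j$, so the bound of one survives. As a bonus remark for the upcoming applications to positive cycles, where all local functions are non-constant on $\bool$, the same composition argument even shows that the in-degree stays exactly one throughout the parallelization, a refinement that streamlines the counting of fixed points in the next subsections.
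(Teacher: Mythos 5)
Your proof is correct and follows essentially the same route as the paper's: induction on the substeps of the parallelization, with the identity as base case and the observation that composing functions each depending on a single variable again yields a function depending on a single variable. Your extra care about declared versus essential dependence is sound (and in fact, since every in-degree-one local function is $x_j$ or $\neg x_j$, the composition is never constant, so the in-degree stays exactly one in general, not only for positive cycles).
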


\begin{proof}
  We prove the result by induction on the number of substeps under $\mu$,
  by noticing that $\inter{f}$ of in-degree one means that every local
  function is of the form $f_i(x)=x_j$ or $f_i(x)=\neg x_j$ for some $j\in\entiers{n}$
  (hence we abusively say that the local function has in-degree one).
  Initially $\mmjsub{f}{\mu}{0}$ is the identity, verifying the claim.
  If $\mmjsub{f}{\mu}{s}$ has in-degree one,
  automata not udpated will still have the same local function of in-degree one,
  whereas the update of automaton $i$ of local function $f_i(x)=x_j$
  will let:
  \[
    \mmjsub{f}{\mu}{s+1}(x)_i=\mmjsub{f}{\mu}{s}(x)_j
  \]
  which has in-degree one by induction hypothesis.
\end{proof}

\begin{theorem}
  \label{theorem:cyclefp}
  Let $f$ be a BAN of size $n$. If $\inter{f}$ has in-degree one
  and only positive arcs, \emph{i.e.}~for all $i\in\entiers{n}$
  the local function is $f_i(x)=x_j$ for some $j\in\entiers{n}$,
  then $f$ has $2^c$ fixed points, with $c$ the number of cycles in $\inter{f}$
  (which equals its number of components).
\end{theorem}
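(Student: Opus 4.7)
The plan is to translate the fixed-point equation into an equality constraint propagated along the arcs of $\inter{f}$, and then count the number of orbits of the resulting equivalence relation, which will turn out to coincide with the weakly connected components of $\inter{f}$ and with the set of cycles of $\inter{f}$.

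First I would unpack the hypothesis: since each local function has the form $f_i(x)=x_{j(i)}$ for a unique in-neighbor $j(i)\in\entiers{n}$, the fixed-point condition $f(x)=x$ is equivalent to $x_i=x_{j(i)}$ for every $i$. In other words, for every arc $(j,i)\in A_f$ one has $x_j=x_i$. Because equality is symmetric and transitive, this forces $x_u=x_v$ whenever $u$ and $v$ lie in the same weakly connected component of $\inter{f}$ (simply propagate the equality along an undirected path). Conversely, any configuration that is constant on each weakly connected component clearly satisfies all such equalities, hence is a fixed point. So fixed points are in bijection with the $\{0,1\}$-colorings of the set of weakly connected components of $\inter{f}$.

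It remains to identify the number of components with $c$, the number of (directed) cycles. The in-degree-one hypothesis gives $|A_f|=n$, and summing in-degrees over a weakly connected component with $v$ vertices yields exactly $v$ arcs, so such a component has a unique undirected cycle. This cycle must actually be directed, because iterating the (unique) in-neighbor map starting from any vertex of the component produces an infinite backward sequence that is forced to loop inside the component. Conversely, every directed cycle lies in a single component, and two distinct directed cycles cannot coexist in the same component, by the edge-count above. Thus the number of directed cycles equals the number of components, and both equal $c$, giving a total of $2^c$ fixed points.

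There is no substantive obstacle: the only point that deserves care is the edge-count argument showing "exactly one directed cycle per component", which relies both on $|A_f|=n$ and on the in-degree-one constraint (to upgrade the undirected cycle to a directed one). Everything else is a routine propagation-of-equality argument.
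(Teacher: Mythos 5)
Your proof is correct and reaches the conclusion by the same decomposition as the paper: each weakly connected component contributes an independent factor of $2$, and the number of components equals the number of cycles, giving $2^c$. The difference is only in how each step is justified: the paper observes that each component is a cycle with trees rooted on it, so the only fixed points are the two uniform configurations on the cycle propagated outward to the trees, whereas you read the fixed-point equation $x_i=x_{j(i)}$ as an equality constraint on every arc and propagate it along undirected paths, obtaining ``constant on each component'' directly --- a slightly more self-contained version of the same argument. You also supply the edge-counting justification ($v$ vertices and $v$ arcs per component, hence exactly one cycle, necessarily directed by iterating the in-neighbor map) for the structural fact that the paper simply asserts; there is no gap in your write-up.
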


\begin{proof}
  Since $\inter{f}$ has in-degree one, each of its component is a cycle
  with down-trees rooted on it.
  Components are independent, and each of them has $2$ fixed points:
  the all-$0$ and the all-$1$ configurations, because they are the only
  fixed points on the cycle and spread to the trees (from root to leaves).
  The number of combinations is $2^c$.
\end{proof}

\subsection{Numerical simulations on positive cycles}
\label{ss:cycle_num}

\begin{figure}
 \centering
 \begin{tabular}{| c | c | c | c | c | c | c |}
  \hline
  $n$ & 1 cycle & 2 cycles & 3 cycles & 4 cycles & 5 cycles & 6 cycles \\
  \hline
  3 & 13 & 0 & 0 & 0 & 0 & 0\\
  \hline
  4 & 67 & 0 & 0 & 0 & 0 & 0\\
  \hline
  5 & 441 & 30 & 0 & 0 & 0 & 0\\
  \hline
  6 & 3555 & 36 & 0 & 0 & 0 & 0\\
  \hline
  7 & 29625 & 3360 & 588 & 0 & 0 & 0\\
  \hline
  8 & 293091 & 30552 & 5400 & 0 & 0 & 0\\
  \hline
  9 & 3401113 & 424278 & 73296 & 20700 & 0 & 0\\
  \hline
  10 & 42263483 & 4757460 & 629950 & 172900 & 1800 & 1500\\
  \hline
  11 & 551305591 & 83321513 & 20529729 & 7008540 & 1133550 & 130680\\
  \hline
  \end{tabular}
  \caption{
    Results of the numerical simulations on the number of fixed points,
    for block-parallel update schedules on canonical positive cycles.
    For $n$ from $3$ to $11$, the count in column ``$c$ cycles''
    is the number of different schedules whose parallelization has $c$ cycles
    (hence $2^c$ fixed points by Theorem~\ref{theorem:cyclefp}).
  }
  \label{fig:table_numericalsim}
\end{figure}

In this subsection we present numerical simulations to count the number of fixed points
that block-parallel update schedules can reach on the primitive structures of canonical positive cycles.
More precisely, for each size $n$, we study all block-parallel update schedules from $\BPn$
that give different sequences of blocks through $\varphi$
(called representatives for the equivalence classes of $\equiv_0$ in~\cite{C-Perrot2024a}).
For each $n$, we enumerate the schedules and sort them according to the number
of cycles of the parallelized interaction graph.
From Theorem~\ref{theorem:cyclefp}, the number of cycles is the base 2 logarithm
of the number of fixed points.

Algorithmically, the enumerator is taken from~\cite{C-Perrot2024a},
and the parallelization proceeds substep by substep according to Theorem~\ref{theorem:indegree_one}
(updating the in-neighbor of each local function).
For in-degree one parallelized interaction graphs,
the number of components (equal to the number of cycles)
is computed in linear time.

Our code in Python is accessible on the following repository:
\begin{center}
  \url{https://framagit.org/leah.tapin/blockpargenandcount}.
\end{center}
It is archived by Software Heritage at the following permalink:
\begin{center}
  \url{https://archive.softwareheritage.org/browse/origin/directory/?origin_url=https://framagit.org/leah.tapin/blockpargenandcount}.
\end{center}
We have conducted these numerical experiments on a standard laptop
(processor Intel-Core$^\text{TM}$ i7 @ 2.80 GHz),
it took around $3$ hours for $n=11$.

The results are presented on Figure~\ref{fig:table_numericalsim}.
The maximum number of cycles for $n=10$ and $n=11$ is $6$.
One can remark that the increase of the maximum number of cycles ($1,1,2,2,3,3,4,6,6$)
is irregular (we got no appropriate record on \emph{OEIS},
because the number of cycles is at most $n$).
We have observed through further experimentations that the maximum number
of cycles seems to correspond to update schedules with the largest number of substeps.

\subsection{Creation of new fixed points on positive cycles}
\label{ss:cycle_thm}

Recall that $\bancyclen$ is the BAN whose interaction graph
is the cycle of size $n$ with only positive arcs (denoted $\cyclen$).
In parallel, it has two fixed points (the two uniform configurations).
In this section we present two families of update schedules (depending on the parity of $n$),
showing that even on such a primitive architecture of interactions,
it is possible to get new fixed points under block-parallel update schedules.

\begin{theorem}\label{theorem:mmj_oddn}
  Let $n = 2k+1$ for some $k\geq 2$, and:
  \[
    \mu_{odd} = \{(0, 1, \ldots, k-1),
    (k, k+1, 2k, 2k-1, \ldots, k+2)\}.
  \]
  Then the interaction graph of $\mmjoblock{\bancyclen}{\mu_{odd}}$ has $k$ cycles,
  and $2^k$ fixed points.
\end{theorem}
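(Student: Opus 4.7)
The plan is to reduce to counting cycles of a pointer map. By Theorem~\ref{theorem:indegree_one}, the interaction graph of $\mmjoblock{\bancyclen}{\mu_{odd}}$ has in-degree one, so the parallelization takes the form $g_i(x) = x_{\pi(i)}$ for some map $\pi : \entiers{n} \to \entiers{n}$. By Theorem~\ref{theorem:cyclefp}, the number of fixed points is $2^c$, where $c$ is the number of cycles of the functional graph of $\pi$. It therefore suffices to exhibit $\pi$ and show it has exactly $k$ cycles.

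I would track the pointer maps $\pi^{(s)}$ across the $\ell = \lcm(k,k+1) = k(k+1)$ substeps, starting from $\pi^{(0)} = \mathrm{id}$ and applying $\pi^{(s)}(i) = \pi^{(s-1)}((i-1) \bmod n)$ when $i \in W_s$ and $\pi^{(s)}(i) = \pi^{(s-1)}(i)$ otherwise. Here $W_s$ contains the o-block~1 automaton $(s-1) \bmod k$ and the o-block~2 automaton at position $((s-1) \bmod (k+1))+1$ in the sequence $(k, k+1, 2k, 2k-1, \ldots, k+2)$. The main claim is that $\pi = \pi^{(\ell)}$ satisfies
\[
  \pi(i) = \begin{cases} k-1 & \text{if } i \in \entiers{k}, \\ k+2 & \text{if } i \in \{k,k+1,k+2\}, \\ i & \text{if } i \in \{k+3,\ldots,2k\}. \end{cases}
\]
The functional graph then has $k$ cycles: self-loops at $k-1$, at $k+2$, and at each $i \in \{k+3,\ldots,2k\}$, the first two carrying trees with leaves $\{0,\ldots,k-2\}$ and $\{k,k+1\}$ respectively, yielding $2^k$ fixed points via Theorem~\ref{theorem:cyclefp}.

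To prove the claim, I would partition the $\ell$ substeps into $k+1$ phases of $k$ substeps each. A first invariant, proved by a within-phase induction, is that at the end of every phase $p$ all o-block~1 automata share the same pointer value $v_p = \pi^{((p-1)k)}(2k)$, because phase $p$ first updates automaton~$0$ (copying $\pi(2k)$), then automaton~$1$ (copying the freshly updated $\pi(0)$), and so on. Since $\gcd(k,k+1)=1$, exactly one o-block~2 update is skipped per phase, and a short residue calculation shows that the update of automaton~$2k$ is skipped precisely in phase $k-1$. Tracing the cascade would establish $v_p = 2k - p + 1$ for $p \in \int{k-1}$, the plateau $v_k = v_{k-1} = k+2$, and finally $v_{k+1} = k-1$---this last value is obtained by unfolding the nested dependencies $\pi^{(k^2)}(2k) \leftarrow \pi^{(k^2-k)}(2k-1) \leftarrow \cdots$, which eventually queries the pointer of automaton~$k-1$ at a substep before its first update, leaving the identity value $k-1$.

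The remaining values of $\pi$ on o-block~2 follow from analogous cascades. For each $i \in \{k+3,\ldots,2k\}$, the chain of last-update dependencies starting from $i$ wraps exactly once around the cycle $\cyclen$ and returns to $i$, producing a self-loop; for $i \in \{k, k+1\}$ the chain starts at $i$ and lands at $k+2$; and for $i = k+2$ the chain again completes one full loop back to $k+2$. The main obstacle will be the bookkeeping to verify these cascades precisely---in particular the full-cycle return for the middle positions $i \in \{k+3,\ldots,2k\}$, which relies on the precise interplay between the two coprime periodicities of o-blocks~1 and~2 and their respective phase alignments. Once the closed form of $\pi$ is confirmed, the cycle count yields $c=k$ and Theorem~\ref{theorem:cyclefp} concludes that there are $2^k$ fixed points.
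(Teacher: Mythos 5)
Your overall strategy is the same as the paper's: reduce to cycle-counting via Theorems~\ref{theorem:indegree_one} and~\ref{theorem:cyclefp}, then compute the parallelized map by unfolding the $k(k+1)$ substeps. Your pointer map $\pi^{(s)}$ is the dual bookkeeping of the paper's value-tracking (the paper writes $x_i^t$ for the state of automaton $i$ at substep $t$, which is exactly $x^0_{\pi^{(t)}(i)}$), your three-way split of $\entiers{n}$ into $\entiers{k}$, $\{k,k+1\}$ and $\{k+2,\dots,2k\}$ matches the paper's groups $A$, $B$, $C$, and your closed form for $\pi$ agrees with the paper's conclusion: all of $A$ points to $k-1$, both elements of $\{k,k+1\}$ point to $k+2$, and every element of $\{k+2,\dots,2k\}$ is a self-loop, giving $1+(k-1)=k$ cycles.

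The problem is that the proposal stops exactly where the proof has to start. The phrases ``a short residue calculation shows\dots'', ``tracing the cascade would establish\dots'', ``the remaining values of $\pi$ follow from analogous cascades'', and especially ``the main obstacle will be the bookkeeping to verify these cascades precisely'' defer the entire mathematical content. The paper's proof consists almost exclusively of carrying out those cascades: the shift identity $x_a^{(k+1)b}=x_{a-c}^{(k+1)(b-c)}$ for the descending o-block, the chain $x_0^{k(k+1)}=x_{2k}^{(k+1)(k-1)}=x_{k+2}^{k+1}=x_{k+1}^{k+1}=x_{k-1}^0$ for group $A$, and for each $j\in\{k+2,\dots,2k\}$ the chain $x_j^{(k+1)k}=x_{k+2}^{(k+1)(2k+2-j)}=x_{k+1}^{(k+1)(2k+2-j)}=x_0^{k(2k+1-j)}=x_{2k}^{(k+1)(2k-j)}=x_j^0$, where each step is legitimate only because the relevant index stays within an explicitly checked range. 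This last full-wrap verification for the middle positions is precisely the part you flag as the obstacle, and it is where an error (an off-by-one in the alignment of the two coprime periods) would most plausibly hide. Until those chains and their range conditions are written down, you have a correct plan and a correct target, not a proof.
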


\begin{proof}
\begin{figure}[t]
 \centering
 \includegraphics[scale=0.8]{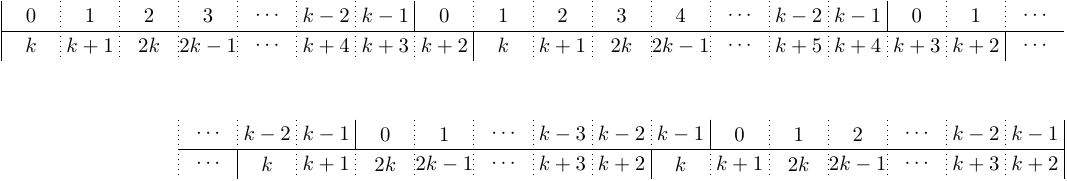}
 \caption{
   Illustration of the ``sequentialization'' of
   $\mu_{odd}$, focusing on its beginning and ending.
   The temporality of substeps flows from left to right,
   each substep updates two automata:
   $\{0,k\}$, then $\{1,k+1\}$, etc.
   Solid vertical lines emphasize the o-blocks.
 }
 \label{fig:mmj_oddn}
\end{figure}
The number of substeps under $\mu_{odd}$ is $k(k+1)$, see Figure~\ref{fig:mmj_oddn}.
For this proof, we will denote the value of automaton $i$ at substep $t$ as
$x_i^t$, with $i\in\entiers{n}$ and $t\in \{0,\dots,k(k+1)\}$
(we have $x^0=x$).
We partition the $n$ automata into three groups:
$A = \entiers[0]{k}$, $B = \{k,k+1\}$, and $C = \{k+2,\ldots,2k\}$.
We will only pay attention to the substeps that are multiples of $k$
(for group $A$) or $k+1$ (for groups $B$ and $C$),
which are the substeps at the end of which the entire o-block of the
corresponding group has been updated.

The automata of groups $A$ and $B$ are in ascending order. This means that after
each automaton in the group has been updated the same number of times in a step
(which is the case at the substeps we consider, including the final substep) all
automata of each of these two groups have the same value.

The automata of group $C$ are in descending order. This means that instead of a
single value being propagated as in the previous groups, the values are shifted
from an iteration of their o-block to the next one, with the introduction of a new value
through automaton $k+2$.
This can be summarized as follows:
for any $a, b, c$ such that $a, a-c \in \{k+2,\ldots,2k\}$ and
$b, b-c \in \{0,\ldots, k\}$, we have $x_a^{(k+1)b} = x_{a-c}^{(k+1)(b-c)}$.

Back to group $A$, we have (recall Figure~\ref{fig:mmj_oddn}):
\begin{align*}
  x_0^{ki} & = x_{2k}^{(k+1)(i-1)} \text{ for all } i \in \entiers[1]{k-1},\\
  x_0^{k^2} & = x_{2k}^{(k+1)(k-2)},\\
  \text{and }  x_0^{k(k+1)} & = x_{2k}^{(k+1)(k-1)}
  = x_{k+2}^{k+1}
  = x_{k+1}^{k+1}
  = x_{k-1}^0.
\end{align*}
We conclude that at the end of a step, every automaton of group A has taken the value of
automaton $k-1$ from the previous step.

Now consider group $B$, we have:
\begin{align*}
  x_k^{(k+1)i} &= x_{k-1}^{k(i-1)} = x_0^{k(i-1)} \text{ for all } i \in \entiers[1]{k},\\
  \text{and in particular, }  x_k^{(k+1)k} &= x_0^{k(k-1)}
  = x_{2k}^{(k+1)(k-2)}
  = x_{k+2}^0 \text{ from what precedes}.
\end{align*}

Regarding group $C$, we have for any $j\in C$:
\begin{align*}
  x_j^{(k+1)k} & = x_{k+2}^{(k+1)(2k+2-j)} & (2 \leq 2k+2-j\leq k)\\
  & = x_{k+1}^{(k+1)(2k+2-j)} & (2 \leq 2k+2-j\leq k)\\
  & = x_0^{k(2k+1-j)} & (1 \leq 2k+1-j\leq k-1)\\
  & = x_{2k}^{(k+1)(2k-j)}& (0 \leq 2k-j\leq k-2)\\
  & = x_j^{0}.
\end{align*}
This means that, for the update schedule $\mu_{odd}$, the local function for every automaton of group $C$
in $\mmjoblock{\bancyclen}{\mu_{odd}}$ is the identity function.

As a conclusion, the interaction graph of the parallelization $\mmjoblock{\bancyclen}{\mu_{odd}}$ is:
\begin{figure}[H]
  \centering
  \includegraphics[scale=0.8]{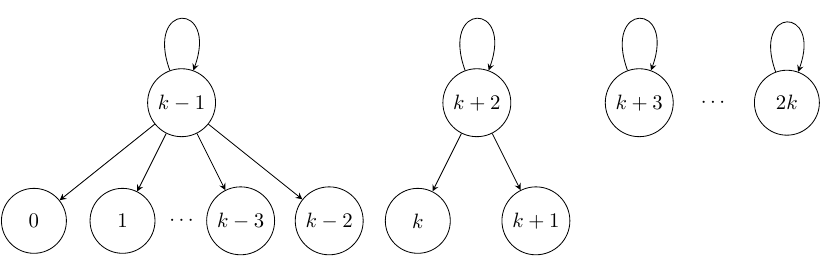}
  \label{fig:intergraphOddnAll}
\end{figure}
\noindent
and this graph contains exactly $k$ cycles,
hence by Theorem~\ref{theorem:cyclefp} it has $2^k$ fixed points.
\end{proof}

\begin{theorem}\label{theorem:mmj_evenn}
  Let $n = 2k$ for some $k\geq 4$, and:
  \[
    \mu_{even} = \{(0), (1, \ldots, k-1),
    (k, 2k-1, k+1, 2k-2, \ldots, k+2)\}.
  \]
  Then the interaction graph of $\mmjoblock{\bancyclen}{\mu_{even}}$ has $k-1$ cycles,
  and $2^{k-1}$ fixed points.
\end{theorem}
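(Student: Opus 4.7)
The plan is to adapt the proof of Theorem~\ref{theorem:mmj_oddn} to the even case. First, I would compute $\ell=\lcm(1,k-1,k)=k(k-1)$ (valid since $k$ and $k-1$ are coprime) and partition the $n=2k$ automata into three groups matching the o-blocks: $A=\{0\}$ (size $1$, updated at every substep), $B=\{1,\ldots,k-1\}$ (ascending, size $k-1$), and $C=\{k,\ldots,2k-1\}$ (interleaved, size $k$). Using the notation $x_i^t$ for the value of automaton $i$ at substep $t\in\{0,\ldots,k(k-1)\}$, I would focus on substeps that are multiples of $k-1$ (ends of $B$-iterations) and of $k$ (ends of $C$-iterations), at which the automata of each group have all been updated the same number of times within the current step.

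The contribution of $A$ and $B$ mirrors the odd case: automaton $0$ is updated at every substep, so $x_0^t=x_{2k-1}^{t-1}$; and because $B$ is in ascending order, at each substep $i(k-1)$ all its automata share a common value cascaded from $x_0$ via the local functions $f_j(x)=x_{j-1}$, itself fed from $x_{2k-1}$ through $A$. The core of the proof is the analysis of group $C$, whose o-block interleaves two monotone subsequences: at odd positions it lists an ascending low half $L=(k,k+1,k+2,\ldots)$, and at even positions a descending high half $H=(2k-1,2k-2,\ldots)$. Within one iteration of the $C$-o-block, each odd-position update reads from the previously-updated odd-position automaton (or from $x_{k-1}\in B$ for the first position), so the value of $x_{k-1}$ at the start of the iteration cascades through all of $L$; even-position updates typically read from a not-yet-updated high-half automaton, shifting values down through $H$ across successive iterations, with the exception of the last even position which reads from the last element of $L$ and thereby couples the two halves. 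Combining these propagations with the recurrence for $B$, each $x_i^{k(k-1)}$ can be expressed as some $x_j^0$, and the interaction graph of $\mmjoblock{\bancyclen}{\mu_{even}}$ can be read off.

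A careful bookkeeping, possibly splitting by the parity of $k$ since the interleaving of $L$ and $H$ meets differently at the tail of $C$'s o-block, should then establish that the resulting graph, which is in-degree one by Theorem~\ref{theorem:indegree_one}, contains exactly $k-1$ cycles. A direct hand-computation at $k=4$ yields self-loops at $0$, $k-1=3$ and $k+2=6$, with the remaining five automata forming trees below these cycles; the pattern suggests that in general the $k-1$ cycles are all self-loops corresponding to automata whose parallelized local function becomes the identity, and Theorem~\ref{theorem:cyclefp} then delivers $2^{k-1}$ fixed points. The main obstacle I anticipate is precisely this bookkeeping for $C$: unlike the purely descending $C$ of the odd case, the alternating order couples $L$ and $H$ across iterations, and synchronising this with the shorter period $k-1$ of $B$ at the terminal substep $k(k-1)$, where a $B$-iteration and a $C$-iteration end simultaneously, will require the most care in the case analysis.
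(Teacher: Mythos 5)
Your overall strategy is the right one and matches the paper's: sequentialize the schedule, track $x_i^t$ only at the substeps that are multiples of the o-block lengths, express each $x_i^{\ell}$ as some $x_j^0$, read off the parallelized interaction graph, and conclude via Theorems~\ref{theorem:indegree_one} and~\ref{theorem:cyclefp}. Your computation $\ell=\lcm(1,k-1,k)=k(k-1)$ is correct (the paper's ``$(k-1)(k+1)$'' is a typo; its own formulas terminate at substep $k(k-1)$), your hand computation at $k=4$ is correct, and so is your guess that all $k-1$ cycles are self-loops (they sit at $0$, at $k-1$, and at $k+2,\dots,2k-2$).

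The genuine gap is in what you yourself identify as the core of the proof: you have misread the third o-block. It is not a full interleaving of an ascending low half and a descending high half. That reading contradicts the displayed sequence for every $k\geq 6$: with your $L$ and $H$ meeting in the middle of $\{k,\dots,2k-1\}$, the last entry of the o-block would be an element near $k+\lfloor k/2\rfloor$, not the stated $k+2$. The intended sequence is $k$, then $2k-1$, then $k+1$, then the contiguous descending run $2k-2,2k-3,\dots,k+2$; only $k$ and $k+1$ occupy ``ascending'' positions (positions $1$ and $3$), and everything from position $4$ onward descends. Consequently the difficulties you anticipate --- the coupling of $L$ and $H$ across iterations and a case split on the parity of $k$ --- are artifacts of the misreading and do not arise. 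The paper instead splits this o-block into $C=\{k,k+1\}$, a two-element ascending cascade fed by $x_{k-1}$ (the one-substep gap between the updates of $k$ and $k+1$ is harmless since $k$ is not re-updated in between), and $D=\{k+2,\dots,2k-1\}$, which is purely descending and shifts values by one position per iteration exactly as group $C$ does in the proof of Theorem~\ref{theorem:mmj_oddn}; the only special automaton is $2k-1$, which feeds automaton $0$ and ends the step equal to $x_0^0$ rather than to its own initial value. As written, your plan analyzes a different update schedule for all $k\geq 6$, so the bookkeeping you defer would not establish the stated theorem without first correcting the structure of the o-block.
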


\begin{proof}
\begin{figure}[t]
 \centering
 \includegraphics[scale=0.8]{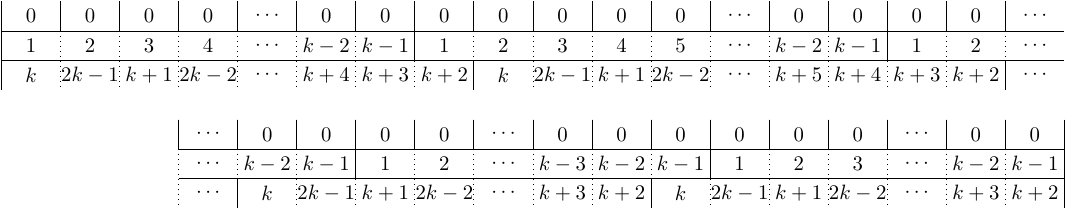}
 \caption{
   The beginning and the end of the ``sequentialization'' of
   $\mu_{even}$
 }
 \label{fig:mmj_evenn}
\end{figure}
The number of substeps under $\mu_{even}$ is $(k-1)(k+1)$,
see Figure~\ref{fig:mmj_evenn}.
We proceed as in the proof of Theorem~\ref{theorem:mmj_oddn},
using the notation $x^t_i$ for the value of automaton $i$ at substep $t$,
with $i\in\entiers{n}$ abd $t\in\{0,\dots,(k-1)(k+1)\}$.
Consider the following partition of the $n$ automata into four groups:
$A = \{0\}$,
$B = \{1, \ldots, k-1\}$,
$C = \{k, k+1\}$, and
$D = \{k+2,\ldots, 2k-1\}$.
We will only pay attention to the substeps that are multiples of $k$
(for group $B$) or $k-1$ (for groups $C$ and $D$)
Group $A$ is a particular case since it contains only automaton $0$, which is also
the only automaton in its o-block, meaning that it is updated at every substep.

Both groups $B$ and $C$ are in ascending order, meaning that they all have the same
value after being updated the same amount of times.
For group $C$, the fact that $k+1$ is not updated right after $k$ is not a problem:
since $k$ doesn't get updated again in the meantime, it keeps the same value.
Group $D$ is in descending order, meaning that we have the same shift of values as
with group $C$ in the proof of Theorem~\ref{theorem:mmj_oddn} for the odd case,
with a slightly different formula:
for any $a, b, c$ such that $a, a-c \in \{k+2,\ldots,2k-1\}$ and
$b, b-c \in \{0,\ldots, k-1\}$, we have $x_a^{kb} = x_{a-c}^{k(b-c)}$.

For group $B$, every $k-1$ subsets, automaton $1$ takes the value of automaton $0$. As we can see
in Figure~\ref{fig:mmj_evenn}, in all but the two last ``cycles'' of subsets, $0$
has taken the value of $2k-1$ after being updated one time less than $1$. In the
two last cycles, it has taken the previous value of $2k-1$ (after being updated
two times less than $1$).
We then have:
\begin{align*}
  x_1^{k-1} & = x_0^0,\\
  x_1^{(k-1)i} & = x_0^{k(i-1)}
  = x_{2k-1}^{k(i-1)} \text{ for all } i\in\{2,\dots,k-2\},\\
  x_1^{(k-1)^2} & = x_{2k-1}^{k(k-3)},\\
  \text{and } x_1^{(k-1)k} & = x_{2k-1}^{k(k-2)}
  = x_{k+2}^k
  = x_{k+1}^k
  = x_k^k
  = x_{k-1}^0.
\end{align*}
We conclude that at the end of a step,
every automaton of group $B$ has taken the value of
automaton $k-1$ from the previous step.

Regarding group $C$, we have:
\begin{align*}
  x_k^{ki} & = x_{k-1}^{(k-1)(i-1)} \text{ for all } i\in\{1,\dots,k-1\},\\
 \text{and in particular, }
  x_k^{k(k-1)} & = x_{k-1}^{(k-1)(k-2)}
  = x_{1}^{(k-1)(k-2)}
  = x_{2k-1}^{k(k-3)}
  = x_{2k-1}^{k(k-3)}
  = x_{k+2}^0.
\end{align*}
That is, at the end of a step, both automata of group $C$ have taken the value of
automaton $k+2$ from the previous step.

Considering group $D$,
we have, for any $j\in D\setminus\{2k-1\}$:
\begin{align*}
  x_j^{k(k-1)} & = x_{k+2}^{k(2k+1-j)} & (3 \leq 2k+1-j\leq k-1)\\
  & = x_{k+1}^{k(2k+1-j)} & (3 \leq 2k+1-j\leq k-1)\\
  & = x_k^{k(2k+1-j)} & (3 \leq 2k+1-j\leq k-1)\\
  & = x_{k-1}^{(k-1)(2k-j)} & (2 \leq 2k-j\leq k-2)\\
  & = x_{1}^{(k-1)(2k-j)} & (2 \leq 2k-j\leq k-2)\\
  & = x_{2k-1}^{k(2k-j-1)} & (1 \leq 2k-j-1\leq k-3)\\
  & = x_j^{0}.
\end{align*}
Except $2k-1$, every automaton of group $D$ regains its original value at the end
of the step.
Since automaton $2k-1$ is not updated at the last substep,
while automaton $0$ is, both have the same
value at the end of a step:
\begin{align*}
  x_0^{k(k-1)} & = x_{2k-1}^{k(k-1)}
  = x_{k+2}^{k\times2}
  = x_{k+1}^{k\times2}
  = x_{k}^{k\times2}
  = x_{k-1}^{k-1}
  = x_{1}^{k-1}
  = x_{0}^{0}.
\end{align*}

As a conclusion, the interaction graph of the parallelization
$\mmjoblock{\bancyclen}{\mu_{even}}$ is:
\begin{figure}[H]
  \centering
  \includegraphics[scale=0.8]{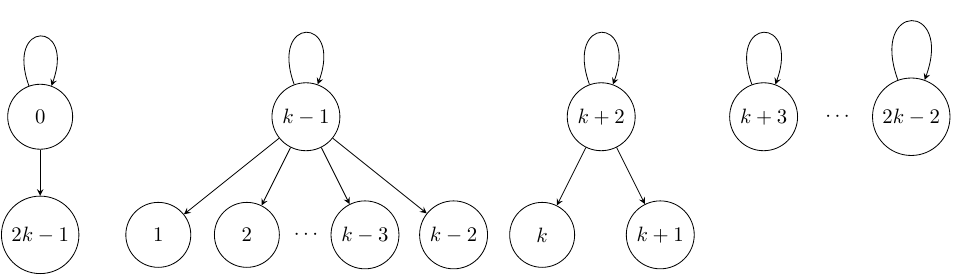}
  \label{fig:intergraphEvennAll}
\end{figure}
\noindent
and this graph contains exactly $k-1$ cycles,
hence by Theorem~\ref{theorem:cyclefp} it has $2^{k-1}$ fixed points.
\end{proof}

\section{Conclusion and further work}
\label{s:conc}

In this article we have studied the creation of fixed points
of Boolean automata networks under block-parallel update schedules:
\begin{itemize}
  \item we have exhibited a simple family of disconnected
    (in terms of the interaction graph)
    BANs that can jump from 0 fixed point in parallel,
    to an exponential number of fixed points in block-parallel;
  \item we have conducted numerical experiments on canonical positive cycles,
    showing that block-parallel schedules are able to create many new fixed points,
    even on these most primitive BANs with feedback;
  \item we have identified families of block-parallel update schedules
    that create exponentially many new fixed points on canonical positive cycles,
    and proved this behavior through the study of the parallelization process
    (and its resulting number of cycles).
\end{itemize}

Theorems~\ref{theorem:mmj_oddn} and~\ref{theorem:mmj_evenn} give exactly
$2^{\lfloor\frac{n-1}{2}\rfloor}$ fixed points (compared to $2$ fixed points in parallel).
However, numerical experiments show that this is not the maximum number of fixed points
attainable under block-parallel update schedules.
Indeed, for $n=10,11$ it is possible to have $2^6$ fixed points.
A perspective would be to characterize the sequence $c(n)$ such that
for $n$ automata it is possible to reach a maximum of $2^{c(n)}$ fixed points
on a positive cycle (from Theorem~\ref{theorem:cyclefp} it is a power of two).
The first terms of $c(n)$ are (starting at $n=3$):
\[
  1,1,2,2,3,3,4,6,6.
\]
There is a trivial bound $c(n)=n$, and obtaining tighter upper bounds would be
an interesting start.

A broader continuation would be to consider more complex interaction graphs of BANs,
and the ability of block-parallel schedules to create fixed points on them.
It suffices to restrict ourselves to connected graphs, because the total number of fixed points
of a BAN is the product of the number of fixed points on each of its components.
In particular, how does the structure of the interaction graph influence the maximum number
of fixed points created by block-parallel update schedules?

\section*{Acknowledgments}

The authors received support from the projects
ANR-24-CE48-7504 ALARICE,
HORIZON-MSCA-2022-SE-01 101131549 ACANCOS, and
STIC AmSud CAMA 22-STIC-02 (Campus France MEAE).

\bibliographystyle{plain}
\bibliography{biblio}

\end{document}